\newcommand{\chain}{\textsc{chain}\xspace}
\newcommand{\expchain}{\textsc{expandedchain}\xspace}
\newcommand{\fullpath}{\textsc{fullpath}\xspace}
\title{\bf No truthful mechanism can be better than $n$ approximate for two natural problems
	}
\author[1]{Stefano Leucci} 
\author[2]{Akaki Mamageishvili} 
\author[1]{Paolo Penna}
\affil[1]{Department of Computer Science, ETH Z\"urich}
\affil[2]{Department of Management, Technology and Economics, ETH Z\"urich}
\date{}
\newtheorem{definition}{Definition}
\newtheorem{prop}[definition]{Proposition}
\newtheorem{lemma}[definition]{Lemma}
\newtheorem{claim}[definition]{Claim}
\newtheorem{theorem}[definition]{Theorem}
\newtheorem{remark}{Remark}
\begin{document}
\maketitle
\begin{abstract}
 This work gives the first natural non-utilitarian problems for which the trivial \emph{$n$ approximation} via VCG mechanisms is the \emph{best possible}. That is, no truthful mechanism can be better than $n$ approximate, where $n$ is the number of agents. The problems we study are the min-max variant of the \emph{shortest path} and the \emph{(directed) minimum spanning tree} mechanism design problems. In these procurement auctions,  agents own the edges of a network, and the corresponding edge costs are private.  Instead of the total weight  of the subnetwork, in the min-max variant we aim to minimize the maximum agent cost.\end{abstract}

\section{Introduction}
One of the central issues in algorithmic mechanism design concerns the interplay between \emph{optimization} and \emph{incentives}. Roughly speaking, one would like to compute a solution which optimizes a function that \emph{depends on some private information} held by the agents. 
In general, agents may find it convenient to misreport this information, and therefore optimization becomes a critical issue. To overcome this problem, one should design a \emph{truthful} mechanism, that is, a combination of an algorithm and a suitable payment rule such that truth-telling is a \emph{dominant strategy} for all agents.\footnote{Throughout this work we assume the standard \emph{quasi-linear} utilities, meaning that each agent's utility is equal to the difference between the  payment received and the private cost associated to the chosen outcome.}

	In their seminal paper,  Nisan and Ronen \cite{nisan-ronen} considered how well truthful mechanisms can solve a given optimization problem.  They view a mechanism as an abstraction of a ``distributed protocol'' which involves various self-interested parties (agents), a typical scenario in Internet applications. The protocol is required to optimize some function, but also needs to provide suitable incentives to the agents to make sure that they cannot profit from manipulating the protocol. 
	Consider the following simple problem: 

\begin{paragraph}{\emph{Shortest path (auction)} \cite{nisan-ronen}.}
	In a communication network, we would like to establish a path between two distinguished nodes having minimal total length (sum of the costs of the links forming this path). Each link of the network is controlled by an entity (agent) who has a \emph{private} cost for having her link used to connect the two nodes. Without any compensation (payment), agents have an incentive to misreport their own cost (i.e., report a very high cost so that their link is not selected). 
\end{paragraph}

\bigskip
Problems like the one above admit a \emph{truthful} mechanism through the standard VCG construction. This may not always be the case. Indeed,  Nisan and Ronen \cite{nisan-ronen} identified two types of optimization problems:
\begin{itemize}
	\item \emph{Utilitarian (min-sum) problems.} These are the problems where the goal is to  minimize the \emph{sum} of all agents' costs. 
	\item \emph{Non-utilitarian (min-max) problems.} Here the goal is to minimize the \emph{maximum} cost incurred by the agents, as opposed to the sum.
\end{itemize}
They showed that, while all utilitarian problems admit an \emph{exact} mechanism using the standard VCG construction, there are simple and natural  non-utilitarian (min-max) problems for which \emph{no truthful mechanism can guarantee the optimum}. Specifically, they considered the following min-max problem:
\begin{paragraph}{\emph{Unrelated machines scheduling} \cite{nisan-ronen}.}
	We have a set of jobs to be scheduled on $n$ unrelated machines (agents).  Each machine has a type which specifies the processing time (cost) for each of the jobs on this machine. These costs are \emph{private} (known to machine $i$ only), and an allocation of the jobs to machines determines the completion time of that machine (sum of processing times of allocated jobs).  The goal is to return an allocation minimizing the \emph{makespan}, that is, minimizing the \emph{maximum} cost among all machines. 
\end{paragraph}

\bigskip
They showed that even for just \emph{two machines}, no truthful mechanism can be better than $2$-approximate\footnote{Here $c$-approximation means that the mechanism returns an allocation whose makespan is at most $c$ times the optimal makespan for the given input (reported costs).}, while  a trivial $n$-approximation can be obtained via the VCG mechanism,\footnote{Their MinWork mechanism is the VCG mechanism minimizing the sum of all agents costs. This implies that every job is allocated to the fastest machine for that job, which turns out to be an $n$-approximation of the optimal makespan.} and thus the case of two machines is tight. They conjecture that the trivial upper bound is the correct answer for this problem:
\begin{quote}
	\textbf{\emph{Conjecture (Nisan-Ronen):}} \emph{No truthful mechanism for unrelated machines scheduling can have an approximation ratio that is smaller than $n$. }
\end{quote}	
This quite natural and well studied optimization problem suggests that \emph{incentives} do have a \emph{negative impact} on the performance guarantee: Despite being NP-hard, one can compute arbitrarily good approximate solutions in polynomial time. In contrast, no truthful mechanism can be better than  $2$-approximate, even for two machines and even if running in exponential time. Unfortunately, the exact  efficiency loss is still unclear, as the conjecture above is still \emph{open} even for \emph{three machines}, with a large gap between the upper and the lower bound (see related work below):
 \begin{enumerate}
 	\item[(1)] There is a trivial \emph{$n$-approximation} using the VCG mechanism, while the best known lower bound is only a small \emph{constant}.  
 	\item[(2)] The conjecture holds if one makes \emph{additional assumptions} on the class of mechanisms.
 \end{enumerate}
 Though the latter result  supports the conjecture above (showing that natural mechanisms cannot improve VCG for this problems), the following basic question remains open:
 \begin{quote}
 	\emph{How much is lost because of truthfulness?}
 \end{quote}
  Interestingly, similar  state of the art holds for analogous non-utilitarian (min-max) problems. 
 
\subsection{Our contribution}
 We show that the following very simple and natural problems do not admit any truthful mechanism whose approximation is better than $n$, where $n$ is the number of agents:
 \begin{paragraph}{\emph{Min-max path}.}
 	We are given a weighted graph and two distinguished nodes $s$ and $t$, and we would like to find the path connecting them which minimizes the \emph{maximum} cost over the agents. Here the cost of an agent is equal to the sum of the weights of her selected edges, i.e., her share of the path's weight. This agent cost is the same as in the shortest path (auction) problem \cite{nisan-ronen} when the agents have several edges. 
 \end{paragraph}
  
  \begin{paragraph}{\emph{Min-max directed MST}.}
  	We are given a directed weighted graph and one distinguished node $s$, and we would like to 
  	find the directed spanning tree rooted in $s$ which minimizes the \emph{maximum} cost over the agents, where the cost of an agent is the same as in the previous problem. 
  \end{paragraph}

\bigskip
We  prove the following two negative results about the approximability that  truthful mechanisms can achieve for the  problems above: 

\begin{theorem}\label{th:inapxpath}
	No truthful mechanism for the min-max path problem can be better than $n$-approximate. 
\end{theorem}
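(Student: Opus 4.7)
Any truthful mechanism for a multi-parameter problem must satisfy weak monotonicity (WMON): fixing the reports $w_{-i}$ of the other agents, whenever the mechanism outputs paths $P$ and $P'$ on reports $w_i$ and $w'_i$ of agent $i$, we have
\[
\sum_{e \in P \cap E_i}(w_i(e) - w'_i(e)) \;\le\; \sum_{e \in P' \cap E_i}(w_i(e) - w'_i(e)),
\]
where $E_i$ is the set of edges owned by agent $i$. The plan is to build, using the \chain, \expchain, and \fullpath gadgets named in the preamble, an instance and a carefully chosen sequence of single-agent perturbations along which the WMON inequalities accumulate into a factor $n$ gap between the mechanism's output and the optimum.

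Concretely, I would assemble a \fullpath as a serial concatenation of $n$ copies of an \expchain gadget, where each \expchain admits two parallel alternatives owned by two distinct agents. An $s$-$t$ path in this graph corresponds to one ``choice'' per gadget, and by tuning weights I can arrange a ``balanced'' path $P^{*}$ of max cost exactly $1$ (each agent owns one unit-weight edge of $P^{*}$) together with $n$ ``concentrated'' paths $P_1,\dots,P_n$ in which every selected edge is owned by a single agent. On the base profile $\tau_0$, a mechanism with approximation ratio $c<n$ is forced by the optimality constraint to output $P^{*}$. Starting from $\tau_0$, I would then define a sequence $\tau_1,\dots,\tau_n$ of profiles differing from $\tau_{i-1}$ only in agent $i$'s declaration: the change is calibrated so that agent $i$'s edge of $P^{*}$ becomes just expensive enough to rule out $P^{*}$ as a $c$-approximate output, while the path $P^{*}$ remains globally optimal (its max cost stays at $1$, since agent $i$'s \emph{true} cost has not actually moved). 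WMON applied between each pair $(\tau_{i-1},\tau_i)$ restricts how the mechanism's output may change, and chaining the $n$ constraints would show that on $\tau_n$ the mechanism must output one of the $P_i$'s, yielding a max cost of at least $n$ against an optimum of $1$.

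The main obstacle is controlling the mechanism's ``escape routes'' during the induction. A priori, after raising agent $i$'s declaration, the mechanism could abandon $P^{*}$ for an unrelated $s$-$t$ path that involves neither agent $i$ nor any previously constrained agent, breaking the accumulation. This is precisely where the \expchain (and its directed variant) pays off: by placing the only non-$P^{*}$ alternative at each position inside a local gadget with just two controlled choices, WMON acts gadget-by-gadget, and the only way to satisfy all $n$ WMON inequalities simultaneously is to replace $P^{*}$ with a concentrated $P_i$. The delicate part is verifying that the local inequalities compose \emph{additively}, so that the forced cost on a single agent reaches $n$ rather than a constant, and that the perturbations can be chosen to keep $P^{*}$ cheap throughout; this is the technical core of the proof.
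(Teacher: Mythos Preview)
Your proposal has a genuine gap in how weak monotonicity is deployed, and it runs in the opposite direction from the paper's argument. In the paper, one first lets the mechanism pick an arbitrary solution on an all-ones \expchain with $\ell$ blocks, identifies an agent (say agent~$1$) charged in at least $\ell/n$ blocks, and then \emph{lowers} the costs of the selected edges of agents $2,\dots,n$ while \emph{raising} their non-selected edges. The key structural fact about \expchain is that every simple path through a block contains exactly one edge of \emph{every} agent; hence a single agent's WMON pins down the entire path in each block, and the solution is \emph{frozen} through all $n-1$ modifications (the paper's \Cref{le:mon-stable}). Agent~$1$'s cost stays at least $\ell/n$, while the optimum on the final profile drops to roughly $\ell/n^2$, giving the ratio~$n$ as $\ell\to\infty$.

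Your plan instead \emph{raises} agent $i$'s declared cost on its $P^{*}$-edge in order to push the mechanism off $P^{*}$. But WMON gives you no control over \emph{where} the mechanism goes once it moves: the inequality you wrote only says the new output cannot use more of agent $i$'s increased-cost edges than $P^{*}$ did; it does not force the mechanism toward any concentrated $P_j$, and there is no mechanism by which these one-sided constraints accumulate into a single-agent load of~$n$. There is also a conceptual slip: once agent $i$ reports a higher cost, both the mechanism's output \emph{and} the optimum are evaluated on that report, so your parenthetical that ``$P^{*}$ remains globally optimal since agent $i$'s true cost has not moved'' is irrelevant to the approximation ratio. Finally, the gadget you describe (two parallel alternatives owned by two agents) is not the paper's \expchain and lacks precisely the one-edge-per-agent-per-block property that makes the freezing lemma work; without it, the ``escape routes'' you flag are real and the induction does not close.
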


\begin{theorem}\label{th:inapxMST}
	No truthful mechanism for the min-max directed MST problem can be better than $n$-approximate. 
\end{theorem}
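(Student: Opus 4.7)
The plan is to derive Theorem~\ref{th:inapxMST} from Theorem~\ref{th:inapxpath} by a reduction that embeds any min-max path instance into a min-max directed MST instance on the same set of $n$ agents. Given a path instance on a directed graph $G = (V, E)$ with source $s$, sink $t$, and $n$ agents owning the edges of $G$, I build an augmented graph $G'$ by adding a single fresh ``hub'' vertex $h$, the directed edge $(t, h)$, and, for every internal vertex $v \in V \setminus \{s, t\}$, the directed edge $(h, v)$. All added edges have publicly known cost $0$ and are not part of any agent's private type. The crucial structural property of $G'$ is that $h$ is reachable only via $t$, so any $s$-rooted spanning arborescence $T$ must reach $t$ using $G$-edges alone, and therefore contains a complete $s$-$t$ path $P \subseteq E$ of $G$ (the unique $s \rightsquigarrow t$ path inside $T$).

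For every $s$-$t$ path $P$ of $G$, the canonical arborescence $T_P := P \cup \{(t, h)\} \cup \{(h, v) : v \in V \setminus (V(P) \cup \{s\})\}$ is a valid $s$-rooted spanning arborescence of $G'$, and since the extra edges have cost $0$ it satisfies $\max_i c_i(T_P) = \max_i c_i(P)$; in particular $\mathrm{OPT}_{\mathrm{MST}}(G') = \mathrm{OPT}_{\mathrm{path}}(G)$. Given any truthful $\alpha$-approximate mechanism $M'$ for min-max directed MST, I define a mechanism $M$ for min-max path that feeds the reported costs to $M'$ (padding the hub edges with $0$), extracts the unique $s$-$t$ path contained in $M'$'s output arborescence $T$, and charges the same payments. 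The chain of inequalities $\max_i c_i(P) \le \max_i c_i(T) \le \alpha \cdot \mathrm{OPT}_{\mathrm{MST}}(G') = \alpha \cdot \mathrm{OPT}_{\mathrm{path}}(G)$ shows that $M$ is $\alpha$-approximate for min-max path, and Theorem~\ref{th:inapxpath} then forces $\alpha \ge n$.

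The main obstacle is verifying that $M$ inherits the truthfulness of $M'$. Since $T$ may in principle include $G$-edges outside $P$, one has $c_i(T) \ge c_i(P)$ with possibly strict inequality, and a deviation that is not profitable under $M'$ could become profitable under $M$. I plan to overcome this by a canonicalisation argument: replacing any non-hub incoming edge into an off-$P$ vertex with the zero-cost edge from $h$ keeps $T$ a spanning arborescence and weakly decreases every agent's cost. The bulk of the technical work will be to show that this substitution is compatible with the truthful payment rule (via cycle monotonicity), so that without loss of generality $M'$ always outputs the canonical arborescence $T_P$; once this is established, $c_i(T) = c_i(P)$ on every profile and the truthfulness of $M$ follows directly from that of $M'$.
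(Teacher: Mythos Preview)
Your reduction has a genuine gap at exactly the point you flag as ``the bulk of the technical work'': the canonicalisation step does \emph{not} preserve truthfulness in general, and no appeal to cycle monotonicity will rescue it. Replacing the output $T$ of a truthful mechanism $M'$ by the canonical arborescence $T_P$ changes agent $i$'s allocation from $T\cap E_i$ to $P\cap E_i$, and monotonicity of the former does not imply monotonicity of the latter. Concretely, suppose that on the truthful profile $M'$ selects a tree whose $s$--$t$ path $P_1$ is expensive for agent~$i$, while on some deviation of~$i$ it selects a tree containing a cheaper path $P_2$ together with additional off-path edges of~$i$ whose extra cost makes the deviation unprofitable \emph{under $M'$}. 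After you strip the off-path edges, the deviation costs~$i$ only $t_i(P_2)<t_i(P_1)$ and becomes profitable under $M$. So the extracted-path mechanism $M$ need not be truthful, and the black-box use of Theorem~\ref{th:inapxpath} collapses. (There is also a modelling mismatch: in the problem as defined every edge is owned by some agent, so your ``public'' hub edges must be assigned to someone; assigning them to a fresh $(n{+}1)$-st agent would only yield a lower bound of $n$ for an instance with $n{+}1$ agents.)

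The paper avoids these difficulties by \emph{not} reducing from the path problem. It builds a dedicated hard instance for directed MST: the same chain of $\ell$ blocks with $n$ parallel paths per block, but now each parallel path carries $n$ rightward edges (owned by the $n$ agents in a fixed cyclic order) together with $n{-}1$ leftward $\epsilon$-edges so that a single full path per block already spans the block. The crucial structural fact---replacing \Cref{le:mon-stable}---is that in any spanning arborescence the $k$-th rightward edge of a full path can only be reached through the preceding $k{-}1$ rightward edges of that same path; hence, when one lowers the costs of agent~$n$, then~$n{-}1$, and so on, monotonicity (\Cref{le:mon}) forces the earlier edges of each initially selected full path to remain selected. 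This direct argument sidesteps any need to canonicalise a mechanism's output.
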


To the best of our knowledge, these are the first examples for which there is a \emph{strong separation} between truthful approximations and non-truthful ones. 
Note that these results hold without any additional assumption on the mechanism (including its running time). This gives an unconditional lower bound based solely on truthfulness.  A trivial $n$-approximation can be obtained by simply running the VCG mechanism: A shortest path (respectively minimum spanning tree) is an $n$-approximation of the min-max path (respectively, min-max directed MST). Thus $n$ is a \emph{tight} bound for truthful mechanisms in both problems.

The merit of this result is the simplicity of the problem and of the proof. Specifically, in both problems, the agents' costs for a solution are the same as in the shortest-path problem with agents owning several edges (sum of the costs of chosen edges). In the auction terminology, we are in a simple case of \emph{no externalities}, meaning that agents care only about the items that they get (which of their edges are chosen).  
Many of the interesting problems are of this sort, and the main difficulty here is that one cannot invoke Roberts Theorem saying that truthfulness implies that the mechanism must be an affine maximizer (in the VCG family).

As we discuss in the next section, all prior inapproximability results for other non-utilitarian min-max problems either (1) are significantly  weaker as only a small \emph{constant} factor on the inapproximability is known or (2) they only apply to certain classes of mechanisms as they are based on additional assumptions.
As we discuss in Section~\ref{sec:conclusion},  our problems can be thought of as a generalization of the unrelated machines scheduling problem.

 \subsection{Related work}
 Arguably, the most general mechanism design technique is  VCG and its simple generalization known as \emph{affine maximizers} (``weighted VCG''). Roughly speaking, these mechanisms  maximize the weighted social welfare (sum of agents valuations) over a fixed subset of the possible solutions. Equivalently, for problems involving private costs, these mechanisms minimize the weighted social cost (sum of agents costs). These constructions are general in the sense that they require no assumption on the agents' domain, that is, they can be applied to \emph{unrestricted domains}.\footnote{This is the case where, for a finite set $\mathcal A$ of alternatives or outcomes, each agent $i$ has a valuation which can be \emph{any} function $v_i: \mathcal A \rightarrow \Re$.} A famous result by Roberts \cite{Roberts} says that affine maximizers are   \emph{the only} truthful mechanism on unrestricted domains.
 
For most of the interesting problems, truthful mechanisms other than VCG may still be possible
 because the problem deals inherently with a \emph{restricted} domain. 
 A classical example is the \emph{no externalities} condition in combinatorial auctions which says that the agents valuations depend uniquely on the items they get, and not on who gets the other items. For multi-unit auctions  there are truthful  \emph{non-VCG}  mechanisms that outperform any VCG mechanism for the same problem \cite{DobNis15}. 
 In a \emph{one-dimensional} or \emph{single-parameter} setting, valuations are linear in the number of items allocated (the private parameter being the valuation for a single unit). These problems are usually easier and truthfulness is less stringent than in the multi-dimensional case (see e.g. Myerson~\cite{Mye81} and Archer and Tardos~\cite{ArcTar01}). 
Truthfulness can be characterized by a so called \emph{monotonicity} condition (see e.g., Rochet~\cite{Roc87}, Bikhchandani et al.~\cite{bikhchandani2006weak}, 
Saks and Yu~\cite{SakYu05}). Intuitively, this is a property of the algorithm (allocation rule) and it prescribes how the allocation of an agent should change if only this agent changes her reported type. In that sense, this condition is \emph{local} as it focuses on one agent at a time. Lavi et al. \cite{Lavi2008} presented an alternative proof of Roberts' theorem using monotonicity (and an extra condition called ``player decisiveness'').
 
 All known lower bounds for the unrelated machines scheduling (and others) are based on the above mentioned monotonicity condition. 
 	The difficulty in the unrelated machine scheduling problem is that its domain is neither one-dimensional  nor unrestricted.  
 The problem becomes interesting already for $n=3$ machines, and the gap between the best upper bound and the best lower bound gets wider as $n$ grows. Nisan and Ronen \cite{nisan-ronen} proved that $n$-approximation can be obtained via the VCG mechanism, and that for $n=2$ no mechanism can be better than $2$-approximate.  This has triggered a fairly large number of papers that focused on this problem and some variants \cite{3-machines, best_lower_bound,anonymous,random-mechanism,fractional-scheduling}. Christodoulou et al.~\cite{3-machines}  showed a lower bound of  $1+\sqrt{2}\simeq 2.41$   for $n=3$ machines.  Koutsoupias and Vidali~\cite{best_lower_bound} proved a lower bound of $1+\varphi\simeq 2.618$ for arbitrarily many machines (i.e., for $n \rightarrow \infty$).  These are the best known lower bounds for this problem, and stronger lower bounds have been obtained only by making \emph{additional assumptions} on the mechanism. Specifically, a \emph{lower bound of $n$} has been obtained under the following various assumptions:  Nisan and Ronen~\cite{nisan-ronen} considered mechanisms whose payments are \emph{additive} in the jobs; Mu'alem and Schapira~\cite{random-mechanism} considered \emph{strong monotonicity}, a stronger condition than the one characterizing truthfulness; Ashlagi et al.~\cite{anonymous} focused on \emph{anonymous} mechanisms, i.e., mechanisms whose allocation does not depend on the names of the agents.
 
 Christodoulou et al.~\cite{fractional-scheduling} studied the \emph{fractional} version of the unrelated machines problem. They proved that still there is a lower bound of $2-\frac{1}{n}$ for this seemingly simpler problem, and obtained a slightly better upper bound of $\frac{n+1}{2}$ compared to the original problem. Similarly to the original problem, if one makes the additional assumption on the mechanism, namely that the algorithm is \emph{task independent}\footnote{This condition requires that the allocation of a task depends only on the processing times of this task on the machines, and not on the other tasks' processing times.}, then a matching lower bound of $\frac{n+1}{2}$ holds.  
 
 Another problem which exhibits a similar structure to the unrelated machines is the \emph{inter-domain routing} problem by  
Mu'alem and Shapira~\cite{random-mechanism}. Here, we have a graph whose nodes are the agents, and the solutions are the trees directed towards a destination node; each solution determines the amount of traffic that each  of node $i$ receives from its neighbors; each node  $i$ has a per-unit cost for each of the neighbors, and the goal is to minimize the \emph{maximum} cost among the nodes.  Again, we have a multi-dimensional non-utilitarian (min-max) problem for which  the best lower bound is a small \emph{constant} and the best upper bound is $n$ using the VCG mechanism \cite{random-mechanism}. Gamzu~\cite{Gam11} proved a lower bound of $2$, which is the best known lower bound for this problem and it also applies to  \emph{randomized} mechanisms.

 The only significant improvements on the upper bounds have been obtained for  \emph{single-parameter} or \emph{two-values} domains.
 Archer and Tardos~\cite{ArcTar01} showed that, unlike in the unrelated machines, the  \emph{related machines} case admits an \emph{exact} (exponential time) truthful mechanism, and a constant approximation can be obtained by a randomized polynomial-time truthful mechanism.  Christodoulou and Kov\'acs \cite{truthfulPTASrelated} even obtained a \emph{polynomial-time} deterministic truthful approximation scheme, that is, for any $\epsilon>0$ there is a truthful polynomial-time mechanism which computes a $(1+\epsilon)$-approximate solution. Mu'alem and Shapira~\cite{random-mechanism} observed that the \emph{single-parameter} version of their inter-domain routing problem also admits an exact truthful mechanism (as opposed to the multi-dimensional version). 
Lavi and Swamy \cite{lavi2009truthful} gave a truthful \emph{$3$-approximation} mechanism for the unrelated machines restricted to \emph{two-values} domains, i.e., when the cost of executing a job on a machine can be only ``low'' or ``high''. Their proof uses (in a non-trivial way) the cycle monotonicity condition by Rochet~\cite{Roc87}. Yu \cite{yu2009truthful} extended the result to \emph{two-range values} where the costs belong to two ranges which are ``sufficiently far apart''.

 	Several papers considered the power of \emph{randomization} for this problem, essentially showing that the situation is not much different. In particular, Mu'alem and Shapira~\cite{random-mechanism} proved a lower bound of $2-\frac{1}{n}$, meaning that it is still  impossible to achieve exact solutions (note that this is the same lower bound for the fractional version).  Lu and Yu~\cite{LuYu08a,LuYu08b} gave a randomized mechanism with approximation  $0.8368n$ and  $\frac{n+5}{2}$, depending on whether we consider \emph{universally truthful} or \emph{truthful in expectation} mechanisms. Intuitively,  the former is a stronger requirement which says  that truth-telling is a dominant strategy, even if agents would  know the random bits, while the latter needs agents to care about their expected utility. For a separation result in a subclass of the two-values domains in unrelated machines see Auletta et al.~\cite{auletta2015mechanisms}. 
 	Like for deterministic mechanisms, the known upper bounds for randomized mechanisms are optimal if one restricts to \emph{task-independent} mechanisms ($\frac{n+1}{2}$ is a lower bound \cite{LuYu08a}).

 Kov\'acs and Vidali \cite{KovVid15} studied mechanisms that satisfy \emph{strong monotonicity} for the unrelated machines scheduling domain. They  provide characterizations of mechanisms satisfying also some additional requirements for the case of \emph{two jobs}. 
Christodoulou et al.~\cite{ChrVid08} characterized the class of truthful mechanisms for \emph{two machines}. In both cases, the resulting class is a \emph{wider} class than  VCG/affine minimizers, which suggests that trying to extend Roberts' theorem to the  scheduling domain in order to prove a lower bound might be hopeless. Indeed, only for the case of \emph{two jobs  and  two machines} truthful mechanisms coincide with VCG mechanisms/affine minimizers  \cite{ChrVid08}. 
 	
 Chawla et al.~\cite{chawla2013prior} and Giannakopoulos and Kyropoulou~\cite{Giannakopoulos2015} considered a   \emph{Bayesian} setting where the agents types  are drawn according to a certain probability  distribution.

 \section{The two problems and  truthfulness}
Both problems we consider share the following features:
 \begin{itemize}
 	\item We are given a weighted (directed or undirected) graph $G=(V,E)$.
 	\item The edges are partitioned among $n$ agents $1, \dots, n$,
 	where each agent $i$ owns a subset $E_i$ of edges, and each edge $e$ belongs to exactly one agent. 
 	\item The cost (weight) of edge $e$ is denoted by
 	\[
 	t_{i}(e)
 	\]
 	where $t_i$ is the \emph{type} of agent $i$ owning this edge. The type of agent $i$ is \emph{private knowledge}, and each agent $i$ can report a possibly different type.
 	\item We have a set $X$ of feasible solutions, where each feasible solution $x \in X$ consists of a  subset of edges. Any feasible solution $x$  costs agent $i$ the sum of the weights (costs) of her edges included in the solution,
 	\[
 	t_i(x) = \sum_{e \in x\cap E_i} t_i(e)\ .
 	\]
 	\item The goal is to find a feasible solution $x^*$ minimizing the maximum agent cost $cost(x,t)=\max_i t_i(x)$ where $t = (t_i)_{i=1,\dots,n}$, that is, $x^* \in \arg \min_{x \in X} cost(x,t)$.
 \end{itemize}
 We call problems with the above structure  \emph{graph problems}, regardless of the last item (the optimization criteria).
The main differences between the two problems we consider are: (i) whether the graph is undirected or not, and (ii)  the structure of the set $X$ of feasible solutions (i.e., whether a solution $x \in X$ corresponds to a path or a tree in $G$). 

\begin{definition}[min-max path and min-max directed MST]\label{def:problems}
	In the \emph{min-max path} problem, the graph is undirected, and the set of feasible solutions consists of all paths connecting two given nodes. In the \emph{min-max directed minimum spanning tree (MST)}, the graph is directed, and the feasible solutions consist of all directed trees (i.e., arborescences) connecting a given vertex (the root) to all other nodes (there is a directed path from the root to every other node).
\end{definition}

We conclude with the formal definition of a \emph{truthful mechanism}.  
 \begin{definition}[truthful mechanism]
 	A mechanism $(A,P)$ is \emph{truthful} if truth-telling is a dominant strategy (utility maximizing) for all agents. That is, for any vector $\tilde t=(\tilde t_1,\ldots,\tilde t_n)$ of costs reported by the agents, for any $i$, and for any true cost $t_i$ of agent $i$,
 	$$P_i(t) - t_i(x) \geq P_i(\tilde t) - t_i(\tilde x),$$
 	where $t=(\tilde t_1,\ldots,\tilde t_{i-1},t_i,\tilde t_{i+1},\ldots,\tilde t_n)$, $x=A(t)$, and $\tilde x=A(\tilde t)$.
 \end{definition}
 
 \subsection{Implications of truthfulness}
In this section, we state the main basic properties that any truthful mechanism must satisfy for our problems.

 \begin{definition}[monotone algorithm]
     Algorithm $A$ is monotone if, for any $t$, for any $i$, and for any $t'_i$, 
     \begin{align}\label{eq:mon}
         t_i(x) + t_i'(x') \leq t_i(x') + t_i'(x) 
     \end{align}
     where $x=A(t)$ and $x'=A(t_i',t_{-i})$.
 \end{definition}

The following is a well-known result (see e.g., \cite{bikhchandani2006weak,random-mechanism}).
\begin{prop}\label{prop:truthful-monotone}
	If a mechanism $(A,P)$ is truthful then $A$ is monotone.
\end{prop}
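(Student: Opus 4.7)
The plan is to derive the monotonicity inequality by invoking the truthfulness definition twice, once for each direction of the ``deviation'' between types $t_i$ and $t_i'$, and then summing the two inequalities so that the payment terms cancel.

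Concretely, let me fix the other agents' reported types $\tilde t_{-i} := t_{-i}$, let $x = A(t)$ and $x' = A(t_i', t_{-i})$, and consider two applications of the truthfulness condition with respect to agent $i$.

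First I would suppose agent $i$'s true cost is $t_i$ and apply truthfulness against the deviation $\tilde t_i = t_i'$ (keeping $\tilde t_{-i} = t_{-i}$). By the definition of truthful mechanism, this yields
\[
P_i(t) - t_i(x) \;\geq\; P_i(t_i', t_{-i}) - t_i(x').
\]
Next I would swap the roles: suppose agent $i$'s true cost is $t_i'$ and apply truthfulness against the deviation $t_i$ (again keeping the reports of the other agents fixed to $t_{-i}$). This yields
\[
P_i(t_i', t_{-i}) - t_i'(x') \;\geq\; P_i(t) - t_i'(x).
\]
Summing the two inequalities, the payment terms $P_i(t)$ and $P_i(t_i', t_{-i})$ appear on both sides and cancel. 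Rearranging the remaining type-times-allocation terms gives exactly
\[
t_i(x) + t_i'(x') \;\leq\; t_i(x') + t_i'(x),
\]
which is \eqref{eq:mon}. Since $t$, $i$, and $t_i'$ were arbitrary, $A$ is monotone.

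There is no real obstacle here: the only subtlety is just to make sure that when invoking the truthfulness definition in each direction we keep the other agents' reports identical, so that the two payment values that appear are really the same and cancel on summation. Everything else is purely algebraic.
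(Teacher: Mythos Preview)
Your argument is correct and is exactly the standard two-inequality-plus-cancellation proof of this fact. The paper does not actually supply a proof of this proposition; it simply cites it as a well-known result, so there is nothing to compare against beyond noting that your write-up matches the classical derivation.
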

From the above property, one can easily derive the following condition that must be satisfied by any truthful mechanism. 
 
 \begin{lemma}\label{le:mon}
 Let $(A,P)$ be a truthful mechanism for a graph problem.
 Let $t$ be a vector of reported types and let $i \in \{ 1, \dots, n \}$ be an agent. 
  For $t$ the mechanism selects a subset $A_i(t)$ of the edges owned by $i$. Now consider types in which these edges are less costly, while all non-selected edges of $i$ are more costly:
 \begin{align}
 t'_{i}(e)<& t_{i}(e) & \text{for } e \in A_i(t) \enspace , \label{eq:mon-edges-up}\\
t'_{i}(e)>& t_{i}(e) & \text{for } e \not\in A_i(t) \enspace . \label{eq:mon-edges-down}
 \end{align}
Then the mechanism must select the same subset of edges of $i$, provided all other agents' types are unchanged, i.e., 
\begin{align}\label{eq:stable}
A_i(t)=A_i(t_i',t_{-i}) \enspace .
\end{align}
 \end{lemma}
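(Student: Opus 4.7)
The plan is to derive the claim from the monotonicity property stated in Proposition~\ref{prop:truthful-monotone} by exploiting the additive structure of graph problems. Let $S := A_i(t)$ and $S' := A_i(t_i',t_{-i})$, so that for the chosen outcomes $x=A(t)$ and $x'=A(t_i',t_{-i})$ we have $x \cap E_i = S$ and $x' \cap E_i = S'$. The goal is to show $S = S'$.

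First I would write down the monotonicity inequality \eqref{eq:mon} with these $x$ and $x'$, and then expand each of the four terms using the graph-problem identity $t_i(y) = \sum_{e \in y \cap E_i} t_i(e)$. A straightforward rearrangement (subtracting the terms that appear on both sides, i.e., the contributions of edges in $S \cap S'$) converts \eqref{eq:mon} into
\begin{equation*}
\sum_{e \in S \setminus S'} \bigl(t_i'(e) - t_i(e)\bigr) \;\geq\; \sum_{e \in S' \setminus S} \bigl(t_i'(e) - t_i(e)\bigr).
\end{equation*}

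Next I would use the two strict inequalities in the hypothesis: by \eqref{eq:mon-edges-up}, every summand on the left-hand side is strictly negative (since $S \setminus S' \subseteq S = A_i(t)$), so the left-hand side is at most $0$, with strict negativity unless $S \setminus S' = \emptyset$; by \eqref{eq:mon-edges-down}, every summand on the right-hand side is strictly positive (since $S' \setminus S$ consists of edges of $i$ not in $A_i(t)$), so the right-hand side is at least $0$, with strict positivity unless $S' \setminus S = \emptyset$. The chain of inequalities $0 \leq \text{RHS} \leq \text{LHS} \leq 0$ then forces both sides to equal $0$, and the strictness of the per-edge inequalities rules out any nonempty symmetric difference. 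Hence $S \setminus S' = S' \setminus S = \emptyset$, giving \eqref{eq:stable}.

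The only real subtlety is bookkeeping: one must be careful that the cancellation on $S \cap S'$ uses only $t_i$ on the left side and only $t_i'$ on the right (which it does, because in the rearrangement the $t_i$-costs of edges in $S \cap S'$ appear with opposite signs, as do the $t_i'$-costs), so no assumption on edges in $S \cap S'$ is required beyond what the hypothesis already provides. After this, the argument is essentially syntactic.
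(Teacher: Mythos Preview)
Your proposal is correct and follows essentially the same route as the paper: both arguments invoke Proposition~\ref{prop:truthful-monotone}, expand the four cost terms via the additive structure $t_i(y)=\sum_{e\in y\cap E_i}t_i(e)$, cancel the contribution of edges in the intersection of the two selected sets, and then use the strict inequalities \eqref{eq:mon-edges-up}--\eqref{eq:mon-edges-down} on the two halves of the symmetric difference to force it to be empty. The only cosmetic difference is that the paper names the two halves of the symmetric difference $S$ and $S'$ directly and phrases the conclusion as a contradiction of \eqref{eq:mon}, whereas you name the full selected sets $S,S'$ and derive the chain $0\le\text{RHS}\le\text{LHS}\le 0$; these are the same computation.
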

 \begin{proof}
 	Consider the symmetric difference $S \cup S'$ between the edges owned by $i$ selected in the two solutions, where $S := A_i(t)\setminus A_i(t')$ and $S':=A_i(t')\setminus A_i(t)$. We shall prove that both $S$ and $S'$ must be empty. If $S$ is non-empty, then  \eqref{eq:mon-edges-up} implies
 	\begin{align}
 		t_i(S)>t_i'(S) \ .
 	\end{align}
 	Similarly, if $S'$ is non-empty, then \eqref{eq:mon-edges-down} implies 
 	\begin{align}
 	t'_i(S')>t_i(S') \ .
 	\end{align}
 	We thus have $t_i(x) - t_i(x') + t_i'(x') - t_i'(x) = t_i(S)-t_i(S')+t_i'(S')-t_i'(S)>0$ if at least one of $S$ and $S'$ is non-empty. This contradicts the monotonicity condition \eqref{eq:mon}, and thus $(A,P)$ cannot be truthful (\Cref{prop:truthful-monotone}). 
 \end{proof}

 \subsection{Min-max path (Proof of \Cref{th:inapxpath})}

 \begin{figure}
 	\centering
 	\includegraphics[scale=.7]{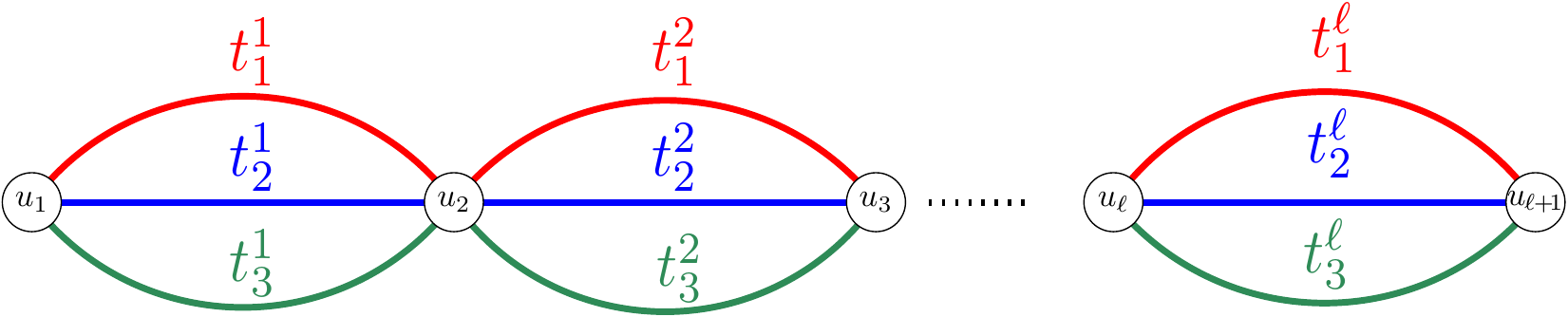}
 	\caption{The \chain graph for $n=3$ agents.}
 	\label{fig:chain-graph}
 \end{figure}
 We construct a graph \chain as follows (see Figure~\ref{fig:chain-graph}). 
 Consider the  concatenation of $\ell$ \emph{blocks} $B_1,\ldots, B_\ell$, where each block $B_k$ consists of $n$ parallel edges connecting $u_k$ to $u_{k+1}$, one for each agent. We denote by $t_i^k$ the cost of the edge $e_i^k$ of agent $i$ in block $k$. We shall specify the number $\ell$ of blocks and the costs $t_{i}^k$ later in the proof. 
 
 We now transform \chain into another graph \expchain in the following way (see Figure~\ref{fig:chain-into:expandedchain}). In every block $B_k$, 
each edge $e_i^k$ of agent $i$ is replaced by a path $e_{i,1}^k,\ldots,e_{i,n}^k$ of $n$ edges.  Edge $e_{i,j}^k$ belongs to agent $j$ and the costs are as follows. Edge $e_{i,i}^k$ owned by $i$ has the original cost of $e_i^k$, and all the other $n-1$ edges in this path have a tiny cost $\epsilon$.
 We name the resulting graph \expchain.
\begin{figure}
	\begin{center}	
		\includegraphics[scale=.7]{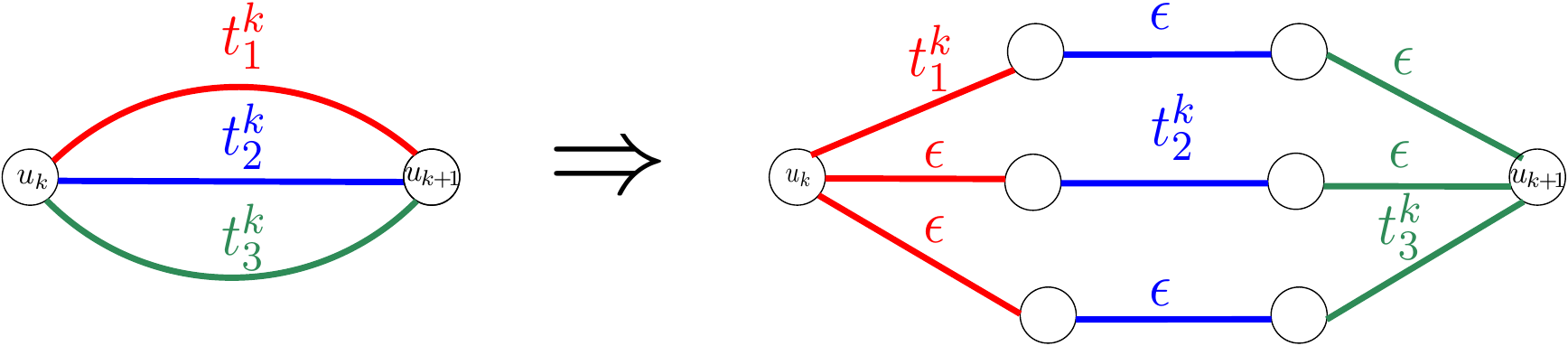}
	\end{center}
	\caption{Transforming \chain into \expchain for $n=3$ agents: Each edge in a block of \chain is replaced by  a path with $n$ edges, one for each agent.}
	\label{fig:chain-into:expandedchain}
\end{figure}

 \begin{lemma}\label{le:mon-stable}
 	Let $(A,P)$ be a truthful mechanism, and let $t$ and $t_i'$ be defined as in \Cref{le:mon}. Then, on \expchain, the algorithm must return the same solution on these two types, i.e.,
 	\[
 	A(t)=A(t_i',t_{-i}) \enspace .
 	\]
 \end{lemma}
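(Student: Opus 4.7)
The plan is to upgrade the equality of agent $i$'s selected edges guaranteed by \Cref{le:mon} to the equality of the \emph{entire} solutions returned by $A$ on $t$ and on $(t_i',t_{-i})$, by exploiting the combinatorial structure of \expchain.

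First I would record a simple structural observation: any feasible $s$--$t$ path $x$ in \expchain must traverse exactly one of the $n$ parallel sub-paths inside each block $B_k$, since distinct sub-paths in a block share only the endpoints $u_k,u_{k+1}$, and different blocks share no internal vertex. Let $\sigma_k(x)\in\{1,\dots,n\}$ denote the index of the sub-path chosen by $x$ in $B_k$. By construction of \expchain, the unique edge of agent $i$ lying in the portion of $x$ inside $B_k$ is $e^k_{\sigma_k(x),i}$, so that $x\cap E_i=\{e^k_{\sigma_k(x),i} : k=1,\dots,\ell\}$. Since for every fixed $k$ the map $j\mapsto e^k_{j,i}$ is injective, the set $x\cap E_i$ uniquely identifies the tuple $\bigl(\sigma_1(x),\dots,\sigma_\ell(x)\bigr)$, which in turn determines $x$ completely.

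Next I would apply \Cref{le:mon} to the types $t$ and $t_i'$ (its hypotheses carry over verbatim, as \expchain is a graph problem in the sense of the paper), obtaining $A_i(t)=A_i(t_i',t_{-i})$. Combining this equality with the structural observation above, the block-by-block sub-path selections of $x=A(t)$ and $x'=A(t_i',t_{-i})$ must coincide, and therefore $x=x'$, which is the desired conclusion.

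The main — indeed the only — subtlety is the realization that in \expchain every feasible solution is reconstructible from the edges of a single agent $i$. This crucially relies on the fact that the construction places exactly one edge of \emph{every} agent on \emph{every} sub-path, so agent $i$'s edges alone reveal the sub-path picked in each block. Once this combinatorial fact is in hand, the lemma follows immediately from \Cref{le:mon}.
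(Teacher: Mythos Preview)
Your proposal is correct and follows essentially the same approach as the paper's proof: both apply \Cref{le:mon} to get $A_i(t)=A_i(t_i',t_{-i})$ and then use the combinatorial fact that in \expchain each sub-path of every block contains exactly one edge of agent $i$, so agent $i$'s selected edges pin down the chosen sub-path in every block and hence the entire solution. Your write-up is somewhat more explicit (you introduce $\sigma_k(x)$ and argue injectivity formally), whereas the paper argues block-by-block in one sentence, but the content is the same.
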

 
 \begin{proof}
 	We show that $A(t)$ and $A(t_i',t_{-i})$ must select the same path in each block of \expchain. Fix a block $B_k$ and notice that each of the $n$ paths contains exactly one edge $e_i^k$ from agent $i$. By \Cref{le:mon}, $e_i^k$ must also be selected in  $A(t_i',t_{-i})$, which forces the same path in this block to be selected as well (and no other paths are selected because the solution must be a simple path).
 \end{proof}

 \paragraph{Proof of \Cref{th:inapxpath}.}
 Suppose all costs are equal to $1$ in the \chain graph (before the transformation) and consider the corresponding graph \expchain.
 
	Notice that the algorithm must select exactly one simple path between $u_k$ and $u_{k+1}$.
	Let $\ell_i$ denote the number of these paths in \expchain that contain an edge of cost $1$ owned by agent $i$. Without loss of generality, assume 
 \[
 0 \leq \ell_n \leq \ell_{n-1} \leq \cdots \leq  \ell_2 \leq \ell_1
 \]
 by simply renaming the agents. Then, since all $\ell$ blocks 
 contain one selected path, $$\ell_1 \geq \ell/n.$$
 Now iteratively repeat the following transformation  on the costs $t_{i}$ for agents $i=2,\dots,n$:

 \begin{itemize}
 	\item Set the cost of all edges of agent $i$ selected in the solution to $0$, and increase by $\epsilon$ the cost of edges of $i$ not in the solution.  
 \end{itemize}
 By \Cref{le:mon-stable}, the solution cannot change and therefore, for the final types $t^*$, the cost of agent $1$ is still at least $\ell_1$, thus implying
 \[
 cost(A(t^*),t^*) \ge  \ell_1 \enspace .
 \]
   However, the optimum for this instance $t^*$ would be to distribute these $\ell_1$ paths among the $n$ agents, as each of the other paths contributes at most $\epsilon$ to the cost of the solution:
 \[
 opt(t^*) \leq \left\lceil \frac{\ell_1}{n}\right\rceil (1+\epsilon) + 2\epsilon \ell  \leq \left(\frac{\ell_1}{n} + 1\right)(1+\epsilon) + 2\epsilon \ell \ ,
 \]
 where the term $2\epsilon \ell$ accounts for the fact that a solution always selects, in each of the $\ell$ blocks, at most one edge of initial cost $\epsilon$ per agent, and the new cost is at most $2\epsilon$.  By taking $\epsilon = \frac{1}{2\ell}$ this implies
 \[
	 opt(t^*) \leq \frac{\ell_1}{n} + \frac{\ell_1}{2n \ell} + 1 + \frac{1}{2\ell} + 1  \le  \frac{\ell_1}{n} + 4\ .
 \]
 We thus have
 \[
 \frac{cost(A(t^*),t^*)}{opt(t^*)} \geq  n \cdot \frac{\ell_1}{\ell_1 + 4n}
  = n - \frac{4n^2}{\ell_1 + 4n}
  > n - \frac{4n^2}{\ell_1} \ge n - \frac{4 n^3}{ \ell} \ .
 \]
 Taking $\ell$ arbitrarily large gives a lower bound of $n - \delta$, for any $\delta>0$. \hfill \qed

\section{Min-max directed MST (Proof of \Cref{th:inapxMST})}

The \emph{min-max directed MST} problem is  similar to the min-max path problem with few differences: (i) We have a weighted \emph{directed} graph $G$; (ii) A solution is a directed spanning tree (arborescence) $x$ rooted at some distinguished node $s$.

\subsection{Adapting the reduction} 
In the proof we start from the same chain of blocks as in the proof of \Cref{th:inapxpath}, where edges are now directed from $u_k$ to $u_{k+1}$, but the transformation is slightly different (see Figure~\ref{fig:chain-into-expchain:directed} for the intuition).

\begin{figure}
	\begin{center}
		\includegraphics[scale=.99]{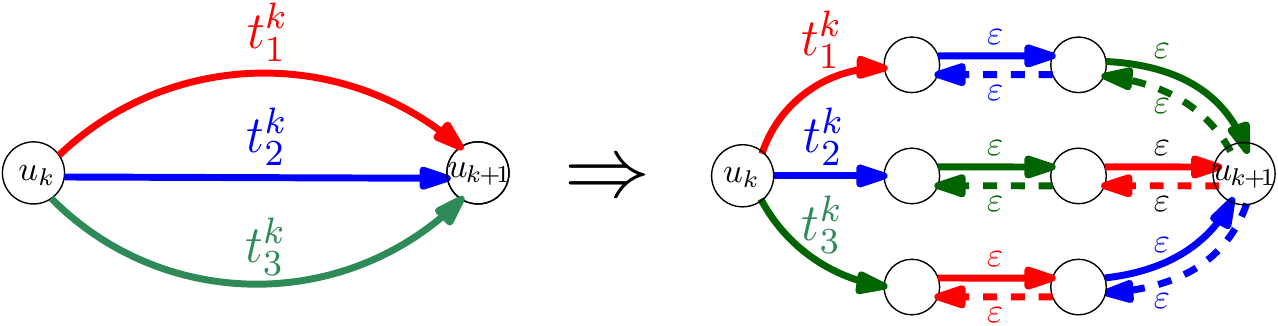}
	\end{center}
	\caption{Adapting the reduction to directed MST for $n=3$ agents.}
\label{fig:chain-into-expchain:directed}
\end{figure}

Specifically, the left block consists of $n$ parallel edges directed from left to right, each of them belonging to a different agent. 
The $i$-th parallel edge in each block is then replaced by a path of $n$ rightward edges plus $n-1$ leftward edges as follows:
\begin{enumerate}
	\item \label{itm:first-edge} The first rightward edge belongs to agent $i$ and its cost is the same as the cost in the left graph (parallel edges).
	\item \label{itm:rightward-edges} The remaining $n-1$ rightward edges belong to agent $(i \mod n) + 1, (i+1 \mod n) + 1, \ldots, (i+n-2 \mod n) + 1$. This order is \emph{the same across all blocks} (this will be crucial in the following). All these edges have cost $\epsilon$. 
	\item \label{itm:lefttward-edges} Each of the previous $n-1$ edges is paired with an edge directed in the opposite direction, also of cost $\epsilon$. 
\end{enumerate}
The whole construction consists of again a chain of $\ell$ blocks, where block $B_k$ consists of the $n$ parallel paths connecting $u_k$ with $u_{k+1}$.
The leftmost node in the first block, i.e., $u_1$, is the root. 

\begin{definition}
	For any block $j$ and any agent $i$, we define $\fullpath(i,j)$ as the unique path of $n$ rightward edges, as defined in Items~\ref{itm:first-edge}-\ref{itm:rightward-edges} above, in which the first edge is owned by agent $i$. We say that the mechanism \emph{selects agent $i$} in block $j$ if $\fullpath(i,j)$  is selected. 
\end{definition}

Note that, in general, it is possible to select more than one agent per block. However, due to the presence of the leftward edges, there  exists always  a feasible solution that selects only \emph{one} agent per block.

\paragraph{Proof of \Cref{th:inapxMST}.}
We are now in a position to apply a similar type of analysis as we did in the proof of \Cref{th:inapxpath}. 
 Initially all costs are equal to $1$ in the directed chain graph (before the transformation). Since the algorithm must select a full path in each block, there must be an agent $i^*$ which has been selected in at least
$\ell_{i^*}\geq \ell/n$ blocks. 

Without loss of generality (as this can be guaranteed by simply renaming the agents), we assume $i^* = 1$.
Notice that $\fullpath(i^*,j)$ traverses the edges owned by agents $1,2,\dots,n$ in left-to-right order.

Now we iteratively repeat the following transformation on the costs $t_{i}$ of agent $i$, where agents are considered in decreasing order from $n$ to $2$: 
 \begin{itemize}
 	\item For the current agent $i$, reduce the cost of all edges of $i$ taken in the solution to $0$, and increase the cost of non-selected edges by $\epsilon$.
 \end{itemize}
 Intuitively, this particular order will ensure that, after each transformation, the mechanism must still select agent $i^* = 1$ in each of the  $\ell_{1}$ blocks where it was initially selected. 
 \begin{claim}\label{cla:directed-MST:stable}
 	For any block $j$ where  $\fullpath(1,j)$ has been initially selected, the following holds. After the $k$-th step of the above transformation, (that is, at the beginning if $k=0$ or after modifying the edge costs of the edges owned by agent $n-k+1$, if $k>0$), the mechanism must still select  all edges in $\fullpath(1,j)$ that belong to agents $1, 2, \dots, n-k$.
 \end{claim}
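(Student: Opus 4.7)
The plan is to prove the claim by induction on $k$. The base case $k=0$ is immediate from the assumption that $\fullpath(1,j)$ is initially selected. For the inductive step, I assume the claim for $k-1$ (so the edges of agents $1, \ldots, n-k+1$ in $\fullpath(1,j)$ are selected before step $k$) and argue that after step $k$, which only modifies the costs of agent $n-k+1$, the edges of agents $1, \ldots, n-k$ in $\fullpath(1,j)$ remain selected. The proof exploits the fact that each internal vertex of $\fullpath(1,j)$ has exactly two incoming edges in the graph: the rightward edge from its predecessor along $\fullpath(1,j)$, and the leftward edge coming back from its successor along $\fullpath(1,j)$.

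The argument splits into two phases. In the first phase I show that the edge of agent $n-k$, located at position $n-k$ of $\fullpath(1,j)$, is still selected after step $k$. By the inductive hypothesis, before step $k$ this edge is the unique incoming arc in the arborescence of the endpoint at position $n-k$, which forces the leftward edge into that vertex (owned by agent $n-k+1$) to be unselected before step $k$. Since step $k$ decreases agent $n-k+1$'s cost on selected edges and increases it on unselected ones, \Cref{le:mon} applied to agent $n-k+1$ guarantees that the set of edges of agent $n-k+1$ selected by the mechanism is the same before and after step $k$; in particular the leftward edge above is still unselected after step $k$. The arborescence property then forces the incoming edge of that vertex to be the rightward edge owned by agent $n-k$, as desired.

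In the second phase I propagate this selection toward the root by a purely structural cycle argument. Going through $m = n-k-1, n-k-2, \ldots, 1$ in decreasing order, I assume inductively that the rightward edge at position $m+1$ of $\fullpath(1,j)$ is already known to be selected after step $k$, hence it is the incoming arc of the endpoint at position $m+1$. If the rightward at position $m$ were not selected, then the only remaining in-edge of its endpoint would be the leftward edge from the endpoint at position $m+1$; combined with the previously selected rightward, this would produce a two-vertex cycle in the parent relation of the arborescence, contradicting the tree property. Hence the edge of agent $m$ at position $m$ must be selected after step $k$.

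The delicate step is the first phase: it is the only place where truthfulness actually enters the argument (through \Cref{le:mon}), and it works precisely because step $k$ processes agent $n-k+1$, which is the owner of the leftward edge threatening to replace the rightward at position $n-k$. The iteration order $n, n-1, \ldots, 2$ is engineered exactly to keep this alignment at every step of the induction.
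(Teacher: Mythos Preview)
Your proof is correct and follows the same inductive skeleton as the paper: induct on $k$, invoke \Cref{le:mon} for the agent whose costs are being modified, and then use the arborescence structure to propagate selection toward the root. Your Phase~2 is exactly the structural argument the paper compresses into the single sentence ``this forces all previous edges in $\fullpath(1,j)$ to be selected.''

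The one genuine difference is in Phase~1. The paper applies \Cref{le:mon} directly to the \emph{rightward} edge of the modified agent (at position $n-k+1$ in your indexing): that edge was selected before the step, so it stays selected after, and propagation starts from there. You instead apply \Cref{le:mon} to the \emph{leftward} edge into vertex $n-k$, conclude it remains unselected, and deduce that the rightward at $n-k$ must be the incoming arc. Both are valid consequences of \Cref{le:mon} (the whole selected set of the modified agent is preserved), but the paper's route is one step shorter and, notably, does not depend on who owns the leftward edges --- a detail the paper's text never specifies. Your argument tacitly assumes each leftward edge is owned by the same agent as its paired rightward edge; this is the natural reading of the construction (and consistent with the figure), but it is an assumption the paper's own proof avoids.
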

 \begin{proof}
 	The proof is by induction on $k$, where the base case $k=0$ is trivial since initially the mechanism has selected all edges in $\fullpath(1,j)$. As for the inductive step, suppose that after the $k$-th step the edge of agent $n-k$ is  still selected (inductive hypothesis). The $(k+1)$-th step consists of lowering the costs of  all edges of agent $n-(k+1)+1=n-k$ which are currently selected, while increasing the cost of the edges owned by agent $n-k$ not currently selected. By \Cref{le:mon}, the edge of agent $n-k$ in $\fullpath(1,j)$  must be selected also after the transformation. Since the feasible solution must be a directed spanning tree, this forces all previous edges in $\fullpath(1,j)$  to be selected, i.e., those from agents  $1, 2, \dots, n-k-1$. That is, the claim holds for $k+1$.
 \end{proof}
 The above claim shows that, after the last iteration, we still have agent $i^*=1$ with a cost at least $\ell_{i^*}\geq \ell / n$. However, in the modified costs $t^*$, the optimum is at most 
 \begin{equation}\label{eq:opt-ub:directed-MST}
 opt(t^*) \leq \left\lceil \frac{\ell_{i^*}}{n}\right\rceil (1+\epsilon) + 4\epsilon n\ell\leq \left(\frac{\ell_{i^*}}{n} + 1\right)(1+\epsilon) + 4\epsilon n\ell \enspace . 
 \end{equation}
 This is so because,  in every block $j'$ where the mechanism initially selected some agent  $i' \neq i^*$,   
 $\fullpath(i',j')$ has cost at most $\epsilon$, due to the single edge of $i^*$ whose initial cost was $\epsilon$.  Moreover, each agent has at most $2n\ell$ edges of initial cost $\epsilon$, and after the transformation this cost is at most  $2\epsilon$.  
  Then, the first inequality in \eqref{eq:opt-ub:directed-MST} can  be obtained by considering the solution in which the $\ell_{i^*}$ blocks selecting $i^*$ are redistributed evenly among all agents.

 By taking $\epsilon = \frac{1}{4\ell n}$,  \eqref{eq:opt-ub:directed-MST} implies
  \[
  opt(t^*) \leq \frac{\ell_{i^*}}{n} + \frac{\ell_{i^*}}{4\ell n^2 } + 1 + \frac{1}{4\ell n} + 1  \le  \frac{\ell_{i^*}}{n} + 4\ .
  \]
 The remainder of the proof is identical to that of \Cref{th:inapxpath}. \hfill \qed

\section{Conclusion and open questions}\label{sec:conclusion}
It is worth noticing that our problems can be thought of as a generalization of the unrelated machines scheduling. Indeed, by considering the \chain graph (see Fig.~\ref{fig:chain-graph}), one can think  of each $t_{i}^j$ as the cost to process job $j$ on machine $i$ from the scheduling problem. Any solution for the min-max path problem corresponds to a job scheduling with the same makespan. For the min-max directed MST problem, a similar argument applies since every (directed) tree on this graph is also a path.

An important point in our proofs is the role of the combinatorial structure, in particular how we expand \chain into \expchain. In a nutshell, it allows to control how the solution changes when a single player cost changes (cf \Cref{le:mon-stable}  and \Cref{cla:directed-MST:stable} for the two reductions, respectively). In several proofs of lower bound results for unrelated machines scheduling, this is done by assuming \emph{extra} properties of the mechanism. The lower bound by Mu'alem and Schapira \cite{random-mechanism} is based on \emph{strong monotonicity}, which implies that the algorithm is somehow breaking ties among the solutions in a fixed manner.

The following two natural questions are still open:

\begin{paragraph}{Question~1:}
\emph{Is it possible to extend our lower bounds to \emph{randomized} mechanisms?}
\end{paragraph}

\begin{paragraph}{Question~2:}
\emph{Is it possible to extend the lower bound for min-max directed MST  to the \emph{undirected} case?}
\end{paragraph}
\bigskip

We do not know the answer to either question. The main reason is that certain key properties of our reduction seem difficult to obtain. Regarding the first question, randomized mechanisms for min-max path could be studied by looking at the \emph{fractional} version of the problem, i.e., the one in which we send a unit of \emph{flow} from the source to the destination (this flow can be divided arbitrarily). Now the monotonicity condition on our reduction does not guarantee anymore that the solution does not change under certain conditions (\Cref{le:mon-stable}). 
Regarding the second question, the monotonicity condition of the min-max directed MST problem is ensured by the \emph{direction} of the edges and again it is lost in the undirected case (\Cref{cla:directed-MST:stable}). 

Furthermore, we observe that our reductions use few values and, in fact, they would fall in the class of \emph{two-range values} studied by Yu \cite{yu2009truthful} for unrelated machines scheduling. Since Yu \cite{yu2009truthful} shows that \emph{constant-approximation} is indeed possible in this restriction, we obtain a \emph{separation} between the scheduling problem and our two min-max problems for such restricted domains.

Finally, we remark that  the min-max path problem can be approximated efficiently in a non-truthful manner (thus complexity is not an issue). In particular, there exists a \emph{polynomial-time approximation scheme}\footnote{This means that, for every $\epsilon>0$, there exists a polynomial-time $(1+\epsilon)$-approximation algorithm.} for any constant number of agents. This follows easily from a result by Tsaggouris et al.~\cite{Tsaggouris2009} on the \emph{multi-objective} version of the shortest-path problem (see Appendix~\ref{app:PTAS}) and it is the same result that holds for scheduling a constant number of unrelated machines. Whether the same holds for the min-max MST is an interesting open question.

\bibliographystyle{plain}
\bibliography{bibliography}

\newpage
\appendix
\section{Complexity considerations}\label{app:PTAS}
\newcommand{\Par}{\mathcal{P}}
In this section, we show that the optimization (non-strategic) version of the min-max path problem can be approximated arbitrarily well in polynomial time, for any constant number of agents. 

\begin{theorem}\label{thm:PTAS}
	For any constant number of agents, there exists a polynomial-time approximation scheme (PTAS) for the min-max path problem.
\end{theorem}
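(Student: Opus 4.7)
}
The plan is to reduce the optimization version of min-max path to the multi-objective (vector-valued) shortest path problem and then invoke the FPTAS of Tsaggouris and Zaroliagis~\cite{Tsaggouris2009}. Concretely, I would regard each agent $i\in\{1,\dots,n\}$ as defining a cost function $c_i$ on the edges of $G$, where $c_i(e)=t_i(e)$ if $e\in E_i$ and $c_i(e)=0$ otherwise. For any $s$--$t$ path $P$, the vector $c(P)=(c_1(P),\dots,c_n(P))$ then has $i$-th coordinate equal exactly to the cost $t_i(P)$ incurred by agent $i$, and the min-max objective is $\max_i c_i(P)$.

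Next I would apply the Tsaggouris--Zaroliagis result, which for $n$ cost functions and any $\varepsilon>0$ computes in time polynomial in the input size and $1/\varepsilon$ (and exponential in $n$, hence polynomial for $n$ constant) an \emph{$\varepsilon$-approximate Pareto set} $\mathcal{P}$ of $s$--$t$ paths: for every $s$--$t$ path $P$ there exists $P'\in\mathcal{P}$ such that $c_i(P')\le(1+\varepsilon)\,c_i(P)$ for all $i$. My algorithm would then simply enumerate $\mathcal{P}$ and return the path $\hat P\in\arg\min_{P'\in\mathcal{P}}\max_i c_i(P')$, which takes time polynomial in $|\mathcal{P}|$.

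To bound the approximation ratio I would let $P^{*}$ denote an optimal min-max path with value $\mathrm{OPT}=\max_i c_i(P^{*})$. By the approximate-Pareto guarantee there is some $\tilde P\in\mathcal{P}$ with $c_i(\tilde P)\le(1+\varepsilon)\,c_i(P^{*})\le(1+\varepsilon)\mathrm{OPT}$ for every $i$, and hence $\max_i c_i(\tilde P)\le(1+\varepsilon)\mathrm{OPT}$. Since $\hat P$ is chosen to minimize the max over $\mathcal{P}$, we get $\max_i c_i(\hat P)\le(1+\varepsilon)\mathrm{OPT}$, which is the desired PTAS guarantee.

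The only subtle step is making sure the reduction is faithful: because each edge belongs to a single agent, the vector cost $c(P)$ coincides with the vector of agent costs, so the sup-norm of $c(P)$ is exactly the min-max objective and a coordinate-wise $(1+\varepsilon)$-approximation translates directly into a $(1+\varepsilon)$-approximation of $\max_i c_i$. There is no real obstacle beyond this; the potential worry, namely the exponential dependence of $|\mathcal{P}|$ on $n$ in the Tsaggouris--Zaroliagis construction, is absorbed by the assumption that $n$ is constant, which is exactly the regime in which a PTAS is claimed.
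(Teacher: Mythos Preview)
Your high-level strategy is exactly the paper's: encode agent $i$'s share as the $i$-th objective, call the Tsaggouris--Zaroliagis FPTAS for multi-objective shortest paths, and then take the $\ell_\infty$-minimizer over the returned approximate Pareto set. The correctness argument you give (coordinatewise $(1+\varepsilon)$-domination implies $(1+\varepsilon)$ on the max) is also the paper's.

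Where you diverge is in the sentence ``there is no real obstacle beyond this.'' The paper quotes the Tsaggouris--Zaroliagis running time in the form $O\!\left(\nu m\,(\nu\log(\nu R_w)/\varepsilon)^{n-1}\right)$, where $R_w=\max_e\max_{k,l} w_k(e)/w_l(e)$. With your natural encoding $c_i(e)=0$ for $e\notin E_i$, the ratio $R_w$ is infinite, so the black box as stated in the paper does not apply. The paper spends most of its proof on precisely this point: it first computes the ordinary shortest-path length $SP(t)$, discards edges of cost exceeding $SP(t)$, replaces every $0$ by $\delta=\varepsilon\,SP(t)/n^2$, and then argues (using $SP(t)/n\le OPT(t)$) that the perturbation shifts any path's cost vector by at most $\varepsilon\cdot OPT(t)$, so a $(1+\varepsilon)$-approximation on the perturbed instance is a $(1+\varepsilon)^2$-approximation on the original. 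Only after this preprocessing is $R_w\le n^2/\varepsilon$ bounded and the cited theorem invoked.

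So your proposal is not wrong in spirit, but relative to the theorem formulation the paper relies on it has a genuine gap: you need either to cite a version of~\cite{Tsaggouris2009} whose running time is stated without the $R_w$ factor (and that tolerates zero weights), or to add the $\delta$-perturbation step and the accompanying error analysis, which is where the actual work in the paper's proof lies.
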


We employ a result by Tsaggouris et al.~\cite{Tsaggouris2009} on the  \emph{multi-objective} version of the shortest path problem. In this version, each edge $e$ has associated an $n$-dimensional \emph{vector} $w(e)=(w_1(e),\ldots,w_n(e))$. Each solution $x$ has an $n$-dimensional cost vector $c(x)=(c_1(x),\ldots,c_n(x))$ with $c_i(x)$ being the cost computed with respect to the $i^{th}$ component of weights $w(e)$,
\[
c_i(x) = \sum_{e \in x} w_i(s)\ . 
\]
One can see that our min-max path is a special case of this multi-objective optimization problem: For an edge $e$ owned by $i$, consider the vector 
\begin{align}\label{eq:multi-obj:weights}
w(e)=(0,\ldots,0,w_i(e),0,\ldots,0) && \text{ with } w_i(e)=t_i(e)
\end{align}
which then implies
\[
c_i(x) =  t_i(x)\ .  
\]

The set $\Par$ of Pareto solutions consists of all solutions that are not dominated: For every $x\in \Par$, there is no other solution $x'$ such that $c_i(x')\leq c_i(x)$ for all $i$, and one of these inequalities being strict.
\begin{definition}[$(1+\epsilon)$-Pareto set]
	 The set $\Par_{\epsilon}$ of  $(1+\epsilon)$-Pareto solutions is a set of solutions such that, for every $x \in \Par$, there is some $y$ in $\Par_{\epsilon}$ such that $c_i(y) \leq (1+\epsilon)c_i(x)$ for all $i$. 
\end{definition}

Let $\Par$ be the Pareto solutions and $\Par_\epsilon$ be the $(1+\epsilon)$-Pareto set. Let  $\nu$ and $m$ denote the number of nodes and edges in the graph, and $n$ be the number of agents as above. Another key parameter is the ratio between the maximum and minimum cost of an edge, i.e., 
\[
R_w = \max_e \max_{k,l} w_k(e)/w_l(e)\ . 
\]

\begin{theorem}[\cite{Tsaggouris2009}]\label{th:multi-obj:SP}
	$\Par_\epsilon$ can be computed in time $O\left(\nu m \left(\frac{\nu \log (\nu R_w)}{\epsilon}\right)^{n-1}\right)$.
\end{theorem}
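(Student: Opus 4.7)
The plan is to follow the standard Bellman--Ford style dynamic programming scheme adapted to the multi-objective setting, with a logarithmic bucketization of the cost space. First, I would normalize by dividing, within each coordinate $k$, by the minimum value of $w_k(e)$; after this rescaling every edge weight lies in $[1,R_w]$ in every coordinate, and hence every $s$-$v$ path has $k$-th coordinate cost in the range $[1,\nu R_w]$. Choose a rounding parameter $\delta$ with $(1+\delta)^{\nu}\le 1+\epsilon$, i.e.\ $\delta=\Theta(\epsilon/\nu)$, and partition each of the first $n-1$ coordinate ranges into $B=O(\nu\log(\nu R_w)/\epsilon)$ geometric buckets of ratio $1+\delta$. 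A \emph{cell} is a tuple $b=(b_1,\ldots,b_{n-1})\in\{0,\ldots,B-1\}^{n-1}$ specifying a bucket in each of the first $n-1$ dimensions.

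Next, I would introduce a table $P[v,b]$ that stores, for each vertex $v$ and each cell $b$, at most one candidate path from $s$ to $v$ whose first $n-1$ coordinate costs land in cell $b$, chosen to minimize the $n$-th coordinate among the candidates encountered so far. Initialise $P[s,\mathbf{0}]$ as the trivial path and run $\nu-1$ rounds of relaxations: in each round, for every edge $(u,v)$ and every filled entry $P[u,b]$, extend the path by $(u,v)$, compute the cell $b'$ of the new first $n-1$ coordinates, and overwrite $P[v,b']$ whenever the extended path has a strictly smaller $n$-th coordinate. At the end, return the set of distinct stored paths as $\Par_\epsilon$.

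Correctness would be proved by induction on the edge-length of a Pareto path $x$ from $s$ to $v$: the inductive hypothesis gives, after the appropriate round, a stored path $y$ to the predecessor of $v$ whose first $n-1$ coordinates are each within factor $(1+\delta)^{\mathrm{length}(x)-1}$ of those of the prefix of $x$, and whose $n$-th coordinate is no larger. Extending $y$ by the last edge of $x$ produces a path that enters some cell $b'$; whatever path eventually occupies $P[v,b']$ matches this extension in the $n$-th coordinate or beats it, while its first $n-1$ coordinates can only differ by an additional factor $1+\delta$ (since all entries of a cell lie in a $(1+\delta)$-interval per dimension). The choice of $\delta$ then bounds the total multiplicative error by $(1+\delta)^{\nu-1}\le 1+\epsilon$.

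For the running time, there are $\nu\cdot B^{n-1}$ table entries, and each of the $\nu-1$ Bellman--Ford rounds scans all $m$ edges and, for each edge, all $B^{n-1}$ cells of the source endpoint, yielding total time $O(\nu m B^{n-1})=O\bigl(\nu m(\nu\log(\nu R_w)/\epsilon)^{n-1}\bigr)$ as claimed. The main obstacle is the correctness induction: I need to track which coordinate is kept ``exact'' (the $n$-th, which is never rounded but only compared) versus which are bucketized, and argue that every overwrite in a cell preserves $(1+\epsilon)$-approximability for \emph{future} extensions as well. This is where the specific tie-breaking rule (minimise the last coordinate within each cell) and the careful calibration $\delta=\Theta(\epsilon/\nu)$ are essential; with a coarser $\delta$ the induction's compounding factor $(1+\delta)^{\nu-1}$ would exceed $1+\epsilon$.
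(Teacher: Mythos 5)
This theorem is not proved in the paper at all: it is imported verbatim from Tsaggouris and Zaroliagis \cite{Tsaggouris2009}, so there is no in-paper argument to compare against. Your reconstruction is, in outline, exactly the algorithm and analysis of that cited source -- a Bellman--Ford dynamic program over $\nu-1$ rounds that bucketizes only the first $n-1$ objectives geometrically with ratio $1+\Theta(\epsilon/\nu)$ and keeps, per vertex and per cell, a single representative minimizing the exact $n$-th coordinate -- and both the correctness induction (compounding error $(1+\delta)^{\nu-1}\le 1+\epsilon$, last coordinate never worse) and the $O\bigl(\nu m (\nu\log(\nu R_w)/\epsilon)^{n-1}\bigr)$ count of table updates are sound. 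The only point worth flagging is that your bucket count implicitly uses $R_w$ as the global ratio of largest to smallest edge weight across all edges and coordinates, whereas the paper's displayed formula $R_w=\max_e\max_{k,l}w_k(e)/w_l(e)$ literally reads as a within-edge ratio; the paper's own later use of $R_{w'}$ makes clear the global ratio is what is intended, so your usage matches the intended (and the source's) parameter.
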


Note that, in the above encoding of our problem \eqref{eq:multi-obj:weights}, $R$ is actually unbounded, which makes the above theorem of no use.  We shall therefore make a minor modifications of the weights, by setting the minimum weight to some  suitably small $\delta>0$ to be specified below:
\begin{align}\label{eq:multi-obj:weights-modified}
w'_i(e)=\max(\delta,w_i(e))
&& \text{ for all $i$ and $e$} \ .
\end{align}
(In particular, all $0$s of $w(e)$ are replaced by $\delta$ which is the minimum weight in the modified vectors.)
In order to specify $\delta$, we make the following observations. 
Let $SP(t)$ denote the length of the shortest path (sum of edge costs) for the original edge costs $t$. Let also $OPT(t)$ denote the optimum for the min-max cost for edge costs $t$. Then
\begin{equation}\label{eq:trivial-apx}
\frac{SP(t)}{n} \leq OPT(t) \leq SP(t). \ 
\end{equation}
(See Remark~\ref{rem:apx-general} below for a proof.)
If $SP(t)=0$ then this solution is also the optimum and we are done. Otherwise, we set $\delta:=\frac{\epsilon SP(t)}{n^2}$ and remove all edges $e$ whose cost $t_i(e)$, for $i$ being the agent owning edge $e$, is larger than $SP(t)$. Let $t'$ be the instance obtained from $t$ after this transformation. Now observe the following: 
\begin{itemize}
	\item Any solution that uses some discarded edge would cost at least $SP(t)$, so it will not be better than the shortest path. 
	\item Any solution $x$ which does not use any discarded edge, has a cost which is close in both edge weightings:
	 $$cost(x,t')\leq cost(x,t) + n \cdot \delta \leq  cost(x,t)+ \epsilon OPT(t)\ , $$
	where the first inequality is due to the fact that any simple path has at most $n$ edges, and we have increased each edge cost by at most $\delta$; the second inequality follows from \eqref{eq:trivial-apx} and by our choice of $\delta$.  
\end{itemize}
In particular, the set $\mathcal P'$ of Pareto solutions for the modified weights $w'$ contains a $(1+\epsilon)$-approximate solution for the original weights $w$. Since $R_{w'}\leq \frac{SP(t)}{\delta}= \frac{n^2}{\epsilon}$, we can apply Theorem~\ref{th:multi-obj:SP} and compute the $(1+\epsilon)$-Pareto set $\Par_{\epsilon}$ in polynomial time.
Then this set contains a polynomial number fo solutions. The solution $x$ in $\Par_{\epsilon}$ minimizing our cost function $cost(x,t')=\max_i c_i(x)$ is a $(1+\epsilon)$-approximation for the weights $w'$, and thus a $(1+\epsilon)^2$-approximation for the original weights $w$. That is, it is a $(1+\epsilon)^2$-approximation for the input $t$. Since we can choose $\epsilon>0$ arbitrarily small, this yields a PTAS. 

\section{Trivial $n$-approximation via VCG}
It is well known that every min-max problem admits a trivial $n$-approximate truthful VCG mechanism (see e.g. Nisan and Ronen \cite{nisan-ronen}). For convenience of the reader, we repeat here this simple folklore argument. Let $OPT(t)$ be the min-max optimum for costs $t$, and $SC(t)$ being the  social cost optimum (sum of agents costs) also with respect to $t$. 
By writing both these quantities according to the agents shares, we have
\[
OPT(t) = \max(OPT_1(t),\ldots,OPT_n(t))
\]
where $OPT_i(t)=t_i(x^*)$ for $x^*$ being the solution minimizing the min-max cost, and
\[
SC(t) = SC_1(t)+\cdots+SC_n(t)
\]
where $SC_i(t)=t_i(x)$ for $x$ being the solution minimizing the social cost. The solution $x$ minimizing the social cost (output by the VCG mechanism) has cost 
\begin{align*}
	cost(x,t) = \max_i t_i(x) = \max_i SC_i(t) &\leq \sum_i SC_i(t) \\
	&\leq \sum_i OPT_i(t)  \leq n \cdot \max_i OPT_i(t)
\end{align*}
which then implies that $x$ is an $n$-approximate solution. 

\begin{remark}\label{rem:apx-general}
	The optimum social cost and the min-max optimum are related according to following two inequalities:
	\begin{equation}\label{eq:n-apx:general}
	\frac{SC(t)}{n} \leq OPT(t) \leq SC(t)\ .
	\end{equation}
	The first one has been proved above. As for the second inequality, simply observe that $OPT(t)=\max_i(OPT_i(t)) \leq \max_i(SC_i(t))\leq SC(t)$, where the first inequality is due to the fact that $OPT(t)$ is the optimum for the min-max. 
		For the min-max path problem, $SC(t)$ is simply the length $SP(t)$ of the shortest path with respect to edge costs $t$. 
\end{remark}

\end{document}